\newtheorem{theorem}{Theorem}[section]
\newtheorem{proposition}[theorem]{Proposition}
\newtheorem{lemma}[theorem]{Lemma}
\newenvironment{proof}[1][]
   {\par\medbreak{\noindent\bfseries Proof#1\quad}}
   {\hbox{}\hfill $\square$\bigbreak}
\newcommand{\familyof}[1]{{\phi #1}}
\newcommand{\averagebundle}{\overline{e}}
\newcommand{\equalsplit}{\mathbf{\bar{e}}}
\def\endowment{e}
\newcommand{\allocation}{\mathbf{x}}
\newcommand{\pricevector}{\mathbf{p}}
\newcommand{\bundles}{\mathbb{R}_+^G}
\newcommand{\er}[1]{#1}
\newcommand{\erel}[1]{}
\newcommand{\sophie}[1]{}
\newcommand{\er}[1]{\textcolor{ForestGreen}{#1}}
\newcommand{\erel}[1]{\textcolor{ForestGreen}{(Erel: #1)}}
\newcommand{\sophie}[1]{\textcolor{blue}{(Sophie: #1)}}
\title{Fairness for Multi-Self Agents}
\author{Sophie Bade and Erel Segal-Halevi}
\begin{document}
\maketitle

\begin{abstract}
We investigate whether fairness is compatible with efficiency in economies with multi-self agents, who may not be able to integrate their multiple objectives into a single complete and transitive ranking. 
We adapt envy-freeness, egalitarian-equivalence and the fair-share guarantee in two different ways. An allocation is \emph{unambiguously-fair} if it satisfies the chosen criterion of fairness according to every objective of any agent; it is \emph{aggregate-fair} if it satisfies the criterion for some aggregation of each agent's objectives.

While efficiency is always compatible with the unambiguous fair-share guarantee, it is incompatible with unambiguous envy-freeness in economics with at least three agents. Two agents are enough for efficiency and unambiguous egalitarian-equivalence to clash.
Efficiency and the unambiguous fair-share guarantee can be attained together with aggregate envy-freeness, or aggregate egalitarian-equivalence.
\end{abstract}

\section{Introduction}
We consider multi-self agents who pursue a variety of potentially conflicting objectives.  An agent may, for example, base her decisions on her career prospects, her family life and her immediate enjoyment.
A fully rational agent is able to aggregate all her objectives into a single complete and transitive ranking; a boundedly rational agent may fail to do so.
In an economy with such boundedly rational agents, we consider three criteria of fairness.
An allocation has the \emph{fair-share guarantee (FS)}%
\footnote{Also known as \emph{proportionality}.}
 if each agent prefers their bundle to the average bundle; it is \emph{envy free (EF)} if each agent
 prefers their bundle to any other agent's;
it is
\emph{egalitarian-equivalent (EE)} if each agent is indifferent between their bundle and some fixed reference bundle.

For each fairness criterion,
we call an allocation
\emph{unambiguously fair} if it satisfies the fairness criterion according to each objective of each agent.%
For an illustration,  consider agents that are driven by two selves; one with a cool-headed long-run view and another
that greedily searches immediate gratification.\footnote{
\citet{kreps1979representation} and
\citet{gul2001temptation, gul2004self} introduced such agents together with proposals for their complete and transitive rankings over choice sets.}  An allocation among such ``tempted'' agents is unambiguously envy-free if neither self of any agent envies any other agent. In this case
no agent envies any other agent according to any
aggregation of her two objectives. The agents may then  avoid the mental load of weighing their long term goals against their immediate greed.

With rational agents, each of the three concepts of fairness is --- under generic conditions --- compatible with efficiency. We then ask whether unambiguous fairness is compatible with efficiency when agents are boundedly rational. The assumption of bounded rationality has two countervailing effects: On the one hand, the set of Pareto optima can be  large. Unambiguous fairness is, on the other hand, hard to satisfy.  A-priori, it is not clear which of these two effects dominates. We find in
Theorem \ref{positive-pofs}  that efficiency is always compatible with the unambiguous fair-share guarantee.
Theorem \ref{positive-pone-2}
shows that any economy with just two agents, all whose selves have convex preferences, has an unambiguously envy-free Pareto optimum.
Unambiguous no-envy, however, conflicts with efficiency in economies with three or more agents (Proposition \ref{negative-pone}).
Two agents suffice for unambiguous egalitarian-equivalence to conflict with efficiency
(Proposition \ref{negative-poee}).

Faced with the non-existence of unambiguously envy-free or egalitarian-equivalent Pareto optima, we investigate  these fairness criteria  according
to some aggregation of all agents' objectives into rational preferences.  Theorems \ref{positive-collective-ef} and \ref{positive-collective-ee} 
show that some Pareto optima with the unambiguous fair-share guarantee respectively satisfy aggregate no-envy and aggregate egalitarian-equivalence, when all selves have strictly convex preferences. 
The aggregators used in the two results
do not vary with the economies under consideration. To understand these aggregators, interpret the agents and their selves as
 as families and their members.
If the Rawlsian  criterion of justice is used to
resolve intra-family conflict,
the leximin aggregator used
in  Theorem \ref{positive-collective-ef} arises.
If the family instead aggregates its members' objectives using Nash bargaining (maximizing the product of members' utilities), we obtain the aggregator used in Theorem  \ref{positive-collective-ee}.



When agents are rational, market equilibria from equal endowments have the fair-share guarantee and are envy-free  \citep{foley1967resource}. 
So when agents are rational, envy-free Pareto optima with the fair share guarantee exist wherever there are market equilibria.  
With boundedly rational agents this strong nexus between no envy and market equilibrium
 disappears:
 Proposition \ref{negative-ceei} shows that, even if some unambiguously envy free Pareto optima arise as market equilibria, it may be necessary to
 give less rational agents larger endowments to obtain such allocations in market equilibrium.


\section{Preliminaries}
\subsection{Goods and Agents}
There are
 $G$ different homogeneous divisible goods and a finite set of agents $I$.
 The set of consumption bundles is $\bundles$; the
 total endowment is $\endowment\in\bundles$ with $\endowment\gg0$.\footnote{For any two vectors $x,x'\in\mathbb{R}^m$ for some integer $m$, say $x\geq x'$ if $x_i\geq x'_i$ for all $i$, $x>x'$ if $x\geq x'$ but not $x=x'$ and $x\gg x'$ if $x_i>x'_i$ for all $i$.}
 An \emph{allocation} is a vector $\allocation = \{x_i\}_{i\in I}$ of consumption bundles $x_i\in \bundles$ whose sum does not exceed the total endowment: $\sum_{i\in I} x_i\leq \endowment$.
$X$ is the set  of all allocations.
The average bundle $\averagebundle{}:=\frac{1}{\mid I\mid}\endowment$ defines each agent's \emph{fair-share} of the total endowment, and each agent gets the fair share in the 
the equal split
$\equalsplit = (\averagebundle,\ldots,\averagebundle)$.

 \subsection{Selves and Aggregators}
When evaluating allocations, agents only consider their own bundles.
Each agent $i$ has a set of \emph{selves} $S_i$.\footnote{Alternatively, one could think of the agent's selves as his \emph{frames} in the sense of \cite{SalantRubinstein2008}.} The preferences of each such self $s\in S_i$ are represented by a continuous utility function $u_s: \bundles\to\mathbb{R}$, normalized  such that $u_s(\averagebundle)=0$. We assume throughout that $u_s$ is strictly increasing in all components, so all selves have strictly monotone preferences.
The preferences represented by
the utility $u_s$ are  \emph{convex} if $u_s(x) > u_s(x')$ implies   $u_s((1 - \alpha)x + \alpha x') > u_s(x')$ for all $x,x'\in \bundles$ and $\alpha\in(0,1)$. They are \emph{strictly convex} if  $u_s((1 - \alpha)x + \alpha x') > u_s(x')$ also holds for any two different bundles $x$ and $x'$ with $u_s(x)=u_s(x')$.

We say that an agent $i$ is \emph{more rational than} an agent $j$ if $S_i\subsetneq S_j$.%
\footnote{The relation ``more rational than-'' is incomplete. For example, a fully rational agent  $i$ with a single self $S_i=\{u_s\}$ is not more-rational than an agent $j$ with two selves if $u_s\notin S_j$.} If an economy is derived from another economy by (weakly) increasing each agent's set of selves, then the original economy is \emph{more rational than} than the derived one. 

Agent $i$ \emph{unambiguously prefers} an option if all his selves prefer this option. Formally,
agent $i$'s \emph{unambiguous preference} $\succsim^U_i$ is a --- typically incomplete --- transitive
 relation, where $x\succsim^U_i x'$ holds if and only if
    $u_s(x)\geq u_s( x')$ for all $s\in S_i$ and where $x\succ^U_i x'$ holds if in addition $u_{s'}(x)> u_{s'}( x')$ holds for some $s'\in S_i$.\footnote{
\citet{mandler2014indecisiveness,mandler2020distributive} calls $\succsim^U_i$ the \emph{behavioral preference}.
\citet{BernheimRangel2009} define a similar relation based on revealed choices, denote it by $R'$ and call it \emph{weak unambigous choice preference}.
}
A transitive and complete preference $\succsim^{agg}_i$ is an \emph{aggregator} of agent $i$'s selves if for all bundles $x,x'$, $x\succsim^U_i x'$ implies $x\succsim^{agg}_i x'$
and $x\succ^U_i x'$ implies $x\succ^{agg}_i x'$.
\citet{szpilrajn1930extension}'s extension theorem guarantees the existence of such aggregators. An agent unambiguously prefers some bundle $x$ to a different bundle $x'$ ($x\succsim_i^U x'$)  if and only if he prefers $x$ to $x'$ ($x\succsim_i^{agg} x'$) according to every aggregator $\succsim_i^{agg}$.%
\footnote{The proof is simple and we omit it. It is available upon request.}
We also call a vector of aggregators $\succsim^{agg}\colon=(\succsim^{agg}_i)_{i\in I}$ an aggregator.

The function  $g((u_s(\cdot))_{s\in S_i}):X\to \mathbb{R}$ represents an aggregator for agent $i$ if $g:\mathbb{R}^{\mid S_i\mid }\to \mathbb{R}$  is strictly increasing in all its components. If $g$ is the sum of its components, then $g$ represents the \emph{utilitarian aggregator}. The \emph{leximin aggregator} is not representable by any function.
To define it, fix any two bundles $x$ and $x'$, and consider the two
minimal utilities of any selves of the agent: if $\min_{s\in S_i}u_s(x)>\min_{s\in S_{i}}u_s(x') $ then $x\succ^{lex}_i x'$. If these two minimal utilities are identical, then the leximin aggregator uses the two second-lowest  utilities to rank $x$ and $x'$. Proceeding inductively $\succsim^{lex}_i$ ranks $x$ and $x'$ as indifferent if and only if both are mapped to the same vector of utilities in increasing order of utility
\citep{Dubins1961How,Moulin2004Fair}.%
\footnote{
The definitions of the utilitarian and leximin aggregators depend on the representations $u_s$ of the selves' preferences.
The income-equivalence principle \citep{decancq2015happiness,FleurbaySchokkaert2013} provides a
specific normalization  with  $u_s(\averagebundle)=0$.
}
To illustrate such aggregators at the hand of a specific decision theoretic model, consider the \cite{Bewley} model of Knightian uncertainty, where the different selves of an agent use different priors to evaluate uncertain events. \cite{Bewley} axiomatized the unambiguous preference $\succsim^U_i$ where agent $i$ prefers an option to a different one if the former yields a higher expected utility than the latter according to every prior of the agent.
If such an agent uses the utilitarian aggregator, he is an expected utility maximizer. If he instead evaluates every choice according to the minimal expected utility over the set of all priors, he is represented by a maximin expected utility, following \citet{gilboa1989maxmin}.%
\footnote{The proof of Theorem \ref{positive-collective-ef} addresses the difference between the leximin aggregator and the function that uses the minimal utility of any self to evaluate options.}


\subsection{Notions of Fairness}
An allocation satisfies the \emph{fair-share guarantee} if each agent weakly prefers his bundle to the fair-share; it
is \emph{envy-free} if no agent strictly prefers the bundle of another agent; it is \emph{egalitarian equivalent} if each agent is indifferent between his bundle and some fixed reference bundle $r$. For a given aggregator $\succsim^{agg}$ and a given fairness notion, we call an allocation  $\succsim^{agg}$-\emph{aggregate}-fair if each agent $i$ considers it fair according to their aggregator $\succsim^{agg}_i$.
Aggregate fairness coincides with standard fairness for rational agents whose preferences are represented by the aggregators. So existing tools suffice to study aggregate fairness.
But
there is a drawback:  we have to commit to particular aggregators.
To avoid such a dependence, we study unambiguously-fair allocations, which are fair according to every aggregator.

Specifically, an allocation $\allocation$ is
\emph{unambiguously envy-free (EF)} if there are  no two agents $i,i'$ and self $s\in S_i$ such that
$u_s(x_{i'})> u_s(x_i)$.
 It
is  \emph{unambiguously egalitarian equivalent (EE)}, if there exists a reference bundle $r$  such that for each agent $i$
  $u_s(r)=u_s(x_i)$ holds for all $s\in S_i$. The allocation $\allocation$ satisfies the   \emph{unambiguous fair-share guarantee (FS)} if for each agent $i$ $u_s(x_i)\geq u_s(\overline{e})$ holds for all $s\in S_i$, and $E$ is the set of all unambiguous-FS allocations.\footnote{
For each of the fairness notions, an allocation is unambiguously-fair if and only if it is aggregate-fair according to every vector of aggregators. The proof is available upon request.
}

Unambiguous fairness relies on the transitive but incomplete preferences $\succsim^U_i$ of agents. For a different approach to fairness for boundedly rational agents, we instead consider complete but intransitive aggergators in Section \ref{sec:intrans}.

\subsection{Pareto Optimality}
An allocation $\allocation$  \emph{Pareto dominates}  a different allocation $\allocation'$ if $u_s(x_i)\geq u_s( x'_i)$ holds for every self $s\in S_i$ of every agent $i$, while $u_{s'}(x_j)>u_{s'}( x_j)$ holds for  at least one agent $j$ and self $s'\in S_j$.
An allocation is \emph{Pareto-optimal} it is not Pareto-dominated by any other allocation.

To formalize the two countervailing effects of bounded rationality on the set of fair and efficient allocations, consider a more-rational and a less-rational economy.
Say that some allocation $\allocation'$ Pareto dominates a different allocation $\allocation$ in the less-rational economy, and say that no selves are indifferent between the two allocations.
Then, clearly, $\allocation'$ also dominates $\allocation$ according to the smaller sets of selves in the more-rational economy.%
\footnote{To see that the clause of indifference matters, consider a fully rational economy in which all selves are indifferent between all bundles. Now add one self to one agent's set of selves and assume that this self strictly ranks all allocations. While no allocation dominates any other in the rational economy, this does not hold in the boundedly rational economy.}
So, abstracting away from the issue of indifferences, the set of Pareto-optima increases when the economy becomes less rational.
On the other hand, the set of unambiguously fair allocations in the more-rational economy is larger, since the chosen fairness notion then has to hold for fewer selves.
Our first result shows that --- no matter the level of irrationality --- the set of Pareto optima contains some unambiguously-fair allocations.
\begin{theorem}
\label{positive-pofs}
Any economy has unambiguous-FS Pareto optima.
\end{theorem}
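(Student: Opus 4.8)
The plan is to produce one allocation that is at once Pareto-optimal and individual-FS, obtained as a leximin-optimal allocation for a renormalized utility profile. For each individual $i\in I$ set $\tilde u_i(z):=u_i(z)-u_i(\averagebundle{})$; since $u_i$ is continuous, so is $\tilde u_i$, and the leximin procedure of the preliminaries applies verbatim to $\{\tilde u_i\}_{i\in I}$. Let $x^*$ be a $\{\tilde u_i\}_{i\in I}$-leximin optimal allocation in $X$. I claim $x^*$ is Pareto-optimal and individual-FS, which proves the theorem (and in passing establishes the existence of leximin-optimal allocations asserted in the introduction).

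First I would check that $x^*$ exists. The set $X$ is bounded (from $0\le x_f$ and $\sum_{f\in F}x_f\le \endowment$) and closed, hence compact, and each $\tilde u_i$ is continuous on it; so $x\mapsto\min_{i\in I}\tilde u_i(x_{\familyof{i}})$ attains its maximum on $X$ and $X^1$ is non-empty and compact. Inductively, at stage $k$ the $k$-th smallest utility is a continuous function on the compact set $X^{k-1}$, so $X^k$ is non-empty and compact; hence $X^{|I|}\neq\emptyset$.

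For individual-FS, note that the allocation $\equalsplit:=(\averagebundle{},\dots,\averagebundle{})$ lies in $X$ because $\sum_{f\in F}\averagebundle{}=\endowment$, and it satisfies $\tilde u_i(\averagebundle{})=0$ for every $i$, so its smallest utility is $0$. Since $x^*\in X^1$ maximizes the smallest utility, $\min_{i\in I}\tilde u_i(x^*_{\familyof{i}})\ge 0$, i.e.\ $u_i(x^*_{\familyof{i}})\ge u_i(\averagebundle{})$ for all $i$ --- exactly individual-FS. For Pareto-optimality, suppose some $x'\in X$ Pareto-dominates $x^*$; then $\tilde u_i(x'_{\familyof{i}})\ge\tilde u_i(x^*_{\familyof{i}})$ for all $i$ with strict inequality for some $j$. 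Hence the utility vector of $x'$ dominates that of $x^*$ coordinatewise and differs from it, so their ascending-sorted versions satisfy the same relation (monotonicity of order statistics), and the sorted vector of $x'$ is lexicographically strictly larger than that of $x^*$. This contradicts $x^*\in X^{|I|}$: at the first coordinate where the sorted vectors differ, $x'$ matches $x^*$ on all earlier coordinates --- so it survives into the corresponding $X^{k-1}$ --- but strictly beats it there. So no such $x'$ exists.

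The argument is mostly routine; the one place that needs care is the normalization. One must subtract precisely the fair-share utility $u_i(\averagebundle{})$, so that the equal-split allocation becomes a common zero benchmark; this is what turns ``leximin maximizes the minimum'' into ``every individual weakly prefers their bundle to $\averagebundle{}$''. (Equivalently, one could run leximin directly on the compact set $E$ of individual-FS allocations defined in the preliminaries, using that any allocation Pareto-dominating a member of $E$ is itself in $E$, so that Pareto-optimality within $E$ implies Pareto-optimality in $X$.) Note also that no monotonicity or convexity of preferences is used anywhere, consistent with the phrase ``any economy''.
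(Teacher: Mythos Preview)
Your proof is correct. It coincides with the paper's alternative argument in Remark~\ref{rem:leximin}: normalize so that $u_i(\averagebundle)=0$, take a $\{u_i\}$-leximin optimum on $X$, and observe it is individual-FS (since the equal split is feasible) and Pareto-optimal (since leximin optima are). The paper's \emph{primary} proof is slightly different: it maximizes the \emph{sum} of utilities over the compact set $E$ of individual-FS allocations, then notes that any Pareto improvement of a point in $E$ lies in $E$, so optimality within $E$ implies optimality in $X$; you yourself sketch exactly this variant in your closing parenthetical. Both arguments are equally short; the leximin route has the advantage of also yielding the leximin existence claim advertised in the introduction, which is why the paper records it as a remark.
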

\begin{proof}
Since  $u_s$ is for each $s\in \cup_{i\in N}S_i$ continuous, the set $E$ of unambiguous-FS allocations is compact. Define a function $F:X\to \mathbb{R}$ as the sum of all utilities of all agents' selves: $F(\allocation)=\sum_{i\in N}\sum_{s\in S_i}u_s(x_i)$ for all $x\in X$.
By the compactness of $E$ and the continuity of $F$, some allocation $\allocation^*$ maximizes $F(x)$ over all $x\in E$, so  that $x^*$ is Pareto optimal in $E$. Since no allocation outside $E$ Pareto dominates any allocation in $E$, $\allocation^*$ is Pareto optimal.
Since $x^*\in E$, $x^*$ is unambiguous-FS. 
\end{proof}
The above proof  continues to hold if we use a different method to find a Pareto optimum in $E$ and if 
 we drop the assumption of monotonic preferences.

\subsection{Market Equilibria}
\label{sub:equilibria}
We borrow our notions of market equilibrium from behavioral welfare economics  \citep{FonOtani1979,mandler2014indecisiveness}.
A triplet $(\pricevector,\allocation^*,\allocation^0)$ with $\allocation^*, \allocation\in X$ and $\pricevector\in \mathbb{R}^G$ is  a \emph{market equilibrium from (endowments) $\allocation^0\in X$}, if the following two conditions hold for each agent $i\in I$:
\begin{enumerate}
\item $\pricevector x^*_i \leq \pricevector x^0_i$, that is, each agent $i$ can afford bundle $x^*_i$; and---
\item
\label{cond:affordable}
For any bundle $x'\in\bundles$,
if $\pricevector x' \leq \pricevector x^0_i$ then
either $u_s(x')=u_s(x^*_i)$ for all selves  $s\in S_i$,
or $u_s(x')<u_s(x^*_i)$ for at least one self $s\in S_i$.
\end{enumerate}
The second condition is weak:  agent $i$'s selves may not unanimously prefer a different affordable bundle to the choice $x^*_i$.
Equivalently, any agent may buy any bundle that is optimal according to \emph{some} aggregation of her selves.
So the definition remains agnostic as to the aggregator an agent uses when she shops. Our result on the impossibility of attaining fairness in market equilibrium (Proposition \ref{negative-ceei}) is strengthened by this agnosticism.%
\footnote{If we would instead require that any self of  agent $i$ prefers $x^*_i$ to all affordable bundles, nonexistence of market equilibria would be immediate, as an agent's selves need not agree on optimal choices.
 Using aggregate preferences in the second condition  would tie us down to a particular rational theory on how agents weigh their different selves.}



\subsection{Examples with two goods}
\label{sub:two-goods}
 In  economies with two goods,
these are called $y$ and $z$, and we normalise the total endowment of  each good to $|I|$, so that $\averagebundle=(1,1)$.
For a differentiable utility $u_s:\mathbb{R}^2_+\to \mathbb{R}$, the \emph{marginal rate of substitution} between $y$ and $z$ is:
\begin{align*}
MRS_s(y,z) := \frac{d(u_s(y,z))}{dy}\bigg/\frac{d(u_s(y,z))}{dz}.
\end{align*}

Two different utilities $u_s$ and $u_{s'}$ have the \emph{single crossing property}
if any two indifference-curves defined by $u_s(y,z)=\alpha$ and $u_{s'}(y,z)=\beta$ for some $\alpha,\beta\in \mathbb{R}$ share at most one point. So
 $u_s$ and $u_{s'}$ have the  single-crossing property if for all bundles $(y,z)$ and $(y',z')$ and either $u_s=u$, $u_{s'}=u'$ or $u_s=u'$, $u_{s'}=u$:
\begin{align*}\mbox{If }y'> y\mbox{ and }z'<z\mbox{ then }
u(y',z')\geq u(y,z)\Rightarrow u'(y',z')> u'(y,z)\\
\mbox{If }y'< y\mbox{ and }z'>z\mbox{ then }
u'(y',z')\geq u'(y,z)\Rightarrow u(y',z')> u(y,z).
\end{align*}
Two differentiable utilities have the single-crossing property if the marginal rate of substitution of one is higher than the other's at each bundle $(y,z)$.
For example, any two Cobb-Douglas utilities $u_s(y,z)=y^{\alpha}z^{1-\alpha}$ with different $\alpha$
 have the single-crossing property.

Our proofs and examples use the characterization of all interior Pareto-optimal allocations as the set of  allocations at which the intersection of all agents' MRS-ranges, the intervals between the smallest and the largest MRS of their selves,   is non-empty.

\begin{theorem}
\label{po-characterization}
Fix a two-good economy with convex preferences representable by differentiable utilities
and an allocation $\allocation$  in the interior of $X$. Then $\allocation$ is Pareto optimal if and only if:
\begin{align*}
\bigcap_{i\in I}[\min_{s\in S_i}MRS_s(y_i,z_i),\max_{s\in S_i}MRS_s(y_i,z_i)]\neq\emptyset.
\end{align*}
\end{theorem}
Theorem \ref{po-characterization}
has been shown to hold in much more general environments by   \citet{FonOtani1979,mandler2014indecisiveness} and we omit its proof  here. Simple modifications of the arguments in
\citet{mas1974equilibrium} also prove Theorem \ref{po-characterization}.

\section{Envy-freeness}
\label{sec:pone}
Theorem \ref{positive-pone-2}  shows that
two-agent economies with convex preferences always have
unambiguous-EF
Pareto optima.  Economies with more  agents, in contrast, need not have such allocations (Proposition \ref{negative-pone}).
 Even if unambiguous-EF  Pareto optima  exist, they may not arise as market equilibria from equal endowments. The market approach, however, turns out to be useful to find unambiguous-FS Pareto optima that are aggregate-EF.
The impossibility results hold even if all but one agent are fully rational.

\subsection{Unambiguous envy-freeness}
\label{sub:pone-with-two}

\begin{theorem}
\label{positive-pone-2}
Any two\-/agent economy, where all selves have convex preferences, has an unambiguous\-/EF and unambiguous\-/FS
 Pareto optimum.
\end{theorem}

\begin{proof}
By Theorem \ref{positive-pofs}, there exists an unambiguous-FS Pareto optimum $\allocation$. To see that $\allocation$ is unambiguous-EF, suppose by contradiction that some self $s$ of some agent $i$ envies agent $j$, so that $u_s(x_i) < u_s(x_{j}) = u_s(\endowment- x_i)$. By convexity, $u_s(x_i)$ must also be smaller than $u_s(\frac{1}{2}x_i+\frac{1}{2}(\endowment{}- x_i))$.
But the bundle $\frac{1}{2}x_i+\frac{1}{2}(\endowment{}- x_i)$ is exactly the fair-share $\averagebundle$ --- a contradiction to the assumption that $x$ is unambiguous-FS.
\end{proof}

The preceding proof only requires  continuous and convex preferences, the assumption of monotonicity can be dropped.
To show that Theorem \ref{positive-pone-2} does not extend to economies with more agents, we
construct a two-goods economy with two fully rational and one boundedly rational agent. All selves' utilities have the single-crossing property and the preferences of the two fully rational agents 1 and 2 are intermediate between the two selves of the boundedly rational agent 3. To avoid envy between agent 3 and the two fully rational agents,
  agent 3 must consume the same bundle as agents 1 \emph{and} as agent 2. A contradiction then arises since agents 1 and 2 have different marginal rates of substitution at any bundle, and therefore must
  consume different bundles in any interior Pareto optimum.
 Figure \ref{fig:pone-none} illustrates.

\begin{proposition}
\label{negative-pone}
Economics with three agents and  two goods need not have any
unambiguous-EF Pareto optima.
\end{proposition}

\begin{figure}
\floatbox[{\capbeside\thisfloatsetup{capbesideposition={left,top},capbesidewidth=.25\textwidth}}]
{figure}
[\FBwidth]
{\caption{\small
	\label{fig:pone-none}
	Proof of Proposition \ref{negative-pone}. The solid lines 1,2  and dashed lines $h$, $w$ respectively represent the indifference curves of the rational agents and the selves of the agent 3.
	}
}
{
	\includegraphics[width=.7\textwidth]{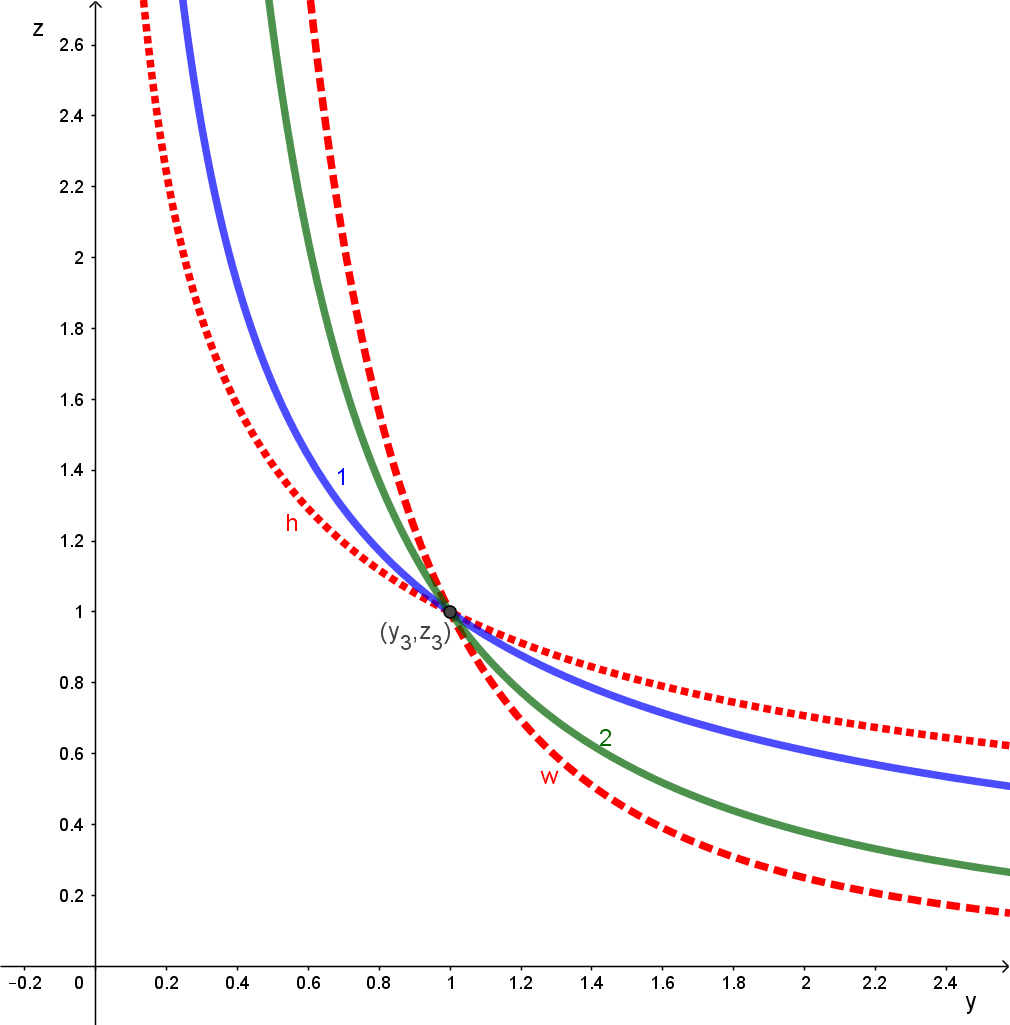}
}
\end{figure}

\begin{proof}
Fix a two-good economy with two fully rational agents 1 and 2 represented by $u_1$ and $u_2$ and a boundedly rational  agent 3 whose two selves are represented by $u_h$ and $u_w$.
All selves'  utilities are
differentiable and convex, and satisfy a single-crossing property so that for any two  bundles $(y,z)$, $(y',z')$:
\begin{align*}
\text{If }&y'> y\text{ and }z'<z\text{ then:}
&u_{h}(y',z')\geq u_h(y,z)\Rightarrow u_{2}(y',z')> u_{2}(y,z),\\&& u_{2}(y',z')\geq u_{2}(y,z)\Rightarrow u_{1}(y',z')> u_{1}(y,z),\\
&& u_{1}(y',z')\geq u_1(y,z)\Rightarrow u_{w}(y',z')> u_{w}(y,z).\\
\text{If }&y'< y\text{ and }z'>z\text{ then: }
&u_{w}(y',z')\geq u_w(y,z)\Rightarrow u_{1}(y',z')> u_{1}(y,z),\\&& u_{1}(y',z')\geq u_{1}(y,z)\Rightarrow u_{2}(y',z')> u_{2}(y,z),\\
&& u_{2}(y',z')\geq u_{2}(y,z)\Rightarrow u_{h}(y',z')> u_{h}(y,z).
\end{align*}
Fix an unambiguous-EF allocation  $\allocation=((y_1,z_1),(y_2,z_2),(y_{3},z_{3}))$.
Since all preferences are strictly monotonic,  either  $(y_1-y_3)(z_1-z_3)< 0$ or $(y_1,z_1)=(y_3,z_3)$ must hold for agents 1 and 3 not to envy each other:
\begin{itemize}
\item If $y_1> y_3$ and $z_1< z_3$, then $u_1(y_1,z_1)\geq u_1(y_3,z_3)$ and single-crossing imply $u_w(y_1,z_1)> u_w(y_3,z_3)$, so self $w$ of agent 3 envies agent 1.
\item If $y_1< y_3$ and $z_1> z_3$, then $u_1(y_1,z_1)\geq u_1(y_3,z_3)$ and single-crossing imply  $ u_h(y_1,z_1)> u_h(y_3,z_3)$, so self $h$ of agent 3 envies agent 1.
\item So we must have $(y_1,z_1)=(y_3,z_3)$.
\end{itemize}
Mutatis mutandis  $(y_{2},z_{2})=(y_3,z_3)$ also holds and all three agents consume the same bundle. So $\allocation=\equalsplit$. But $\equalsplit$ cannot be Pareto optimal since the two fully rational agents consume different bundles in any interior Pareto-optimum (Theorem \ref{po-characterization}).
\end{proof}

The  economy in the proof of Proposition \ref{negative-pone}  illustrates the two countervailing forces of bounded rationality on the set of unambiguously fair Pareto optima.
By standard results, a fully rational version of this economy, where agent 3 has only one self, has an unambiguous-EF Pareto optimum. With the addition of agent 3's second self, the set of Pareto-optima increases while the set of unambiguous-EF allocations decreases. The second effect dominates: the example in the proof has no unambiguous-EF Pareto optimum.
But with the further addition of
 $\frac{1}{2}u_w+\frac{1}{2}u_h$
to agent 1's and agent 2's sets of selves, the set of Pareto optima increases enough for $\equalsplit$ to be an
unambiguous-EF Pareto optimum in this less-rational economy,
by Theorem \ref{po-characterization}.

The example used to prove Proposition \ref{negative-pone}
can be extended to any number of agents by  adding fully rational agents with preferences that are intermediate between the two selves of agent 3.
To extend it to any number of goods, replace the single good $z$ with  $G-1$ perfect substitutes $z^1,\ldots,z^{G-1}$, and define
 self $s$'s utility for bundle $y,z^1,\ldots,z^{G-1}$
as
$u_s(y, z^1 + \cdots + z^{G-1})$,
where $u_s$ is the utility of $s$ in the two-goods economy.

\subsection{Aggregate envy-freeness}
\label{sub:aggregate-ef}
While Theorem \ref{negative-pone} shows that unambiguous-EF Pareto optima need not exist, standard results show that mild conditions suffice for existence of aggregate-EF Pareto optima. Here we investigate conditions under which some aggregate-EF Pareto optima have the unambiguous fair-share guarantee. 
When all selves' preferences are strictly convex, the set of Pareto optima that guarantee the fair-share to each self contains 
some allocations that are envy free according to a particular aggregator---the leximin aggregator.

\begin{theorem}
\label{positive-collective-ef}
If all selves' preferences are strictly convex there exist unambiguous\-/FS and $\succsim^{lex}_f$\-/aggregate\-/EF
 Pareto optima.
\end{theorem}

\begin{proof}
 For each agent $i$ the function   $U^{\min}_i(\cdot) := \min_{s\in S_i}u_s(\cdot)$%
\footnote{Since $U^{\min}_i$ is indifferent between any two bundles associated with the same minimal utility over all selves, $U^{\min}_i$ does not represent an aggregator for agent $i$'s selves. We can therefore not
 directly use    $U^{\min}_i$ as the aggregator in Theorem  \ref{positive-collective-ef}.}
 inherits the continuity, monotonicity and strict convexity of all selves' preferences.\footnote{
\label{ftn:uimin}
To see the strict convexity, fix any two bundles $x\neq x'$, an $\alpha \in(0,1)$, and an agent $i\in I$.
Suppose that $U^{\min}_i(x)\geq U^{\min}_i(x')$.
Let $s^*\in S_i$ be the self for which the latter minimum is attained, that is,
$U^{\min}_i(x') = u_{s^*}(x')$.
 So we have for all $s\in S_i$
\begin{align*}
u_s(x')\geq u_{s^*}(x')\text{ and }
u_s(x)\geq U^{\min}_i(x) \geq U^{\min}_i(x') = u_{s^*}(x').
\end{align*}
Since each $u_s$ represents a strictly convex preference
 we have
$u_s((1-\alpha)x + \alpha x') >  u_{s^*}(x')$ for all $s\in S_i$, which implies
 $U^{\min}_i((1-\alpha)x + \alpha x') > u_{s^*}(x') =  U^{\min}_i(x')$.}
Say   $(\pricevector,\allocation^*,\equalsplit)$  is  a market equilibrium from equal endowments in the economy where each agent $i$'s preferences  are represented by $U^{\min}_i$. 

\medskip

\textbf{Claim 1: $\allocation^*$ is unambiguous-FS.} Fix an arbitrary agent $i$. Thanks to the normalization $u_s(\averagebundle)=0$, and since
$\averagebundle$ is in each agent's budget set,  $\min_{s\in S_i}u_s(x^*_i)=U^{\min}_i(x^*_i)\geq U^{\min}_i(\averagebundle)= \min_{s\in S_i}u_s(\averagebundle)=0$.
So  $u_s(x^*_i)\geq 0$  holds for each $s\in S_i$, and $\allocation^*$ satisfies unambiguous-FS.

\textbf{Claim 2:  $\allocation^*$ is $\succsim^{lex}_f$-aggregate-EF.} Fix an arbitrary agent $i$.
Since $\allocation^*$ is an equilibrium allocation, $x^*_i$ is $U^{\min}_i$-maximal in agent $i$'s budget set. Since budget sets are convex and since
 $U^{\min}_i$ represents strictly convex preferences
 $x^*_i$ is the
 unique $U^{\min}_i$-maximum.
By the definition of $\succsim^{lex}_i$, the set of $\succsim^{lex}_i$-maxima is a subset of the set of $U^{\min}_i$-maxima, so $x^*_i$ is also the unique $\succsim^{lex}_i$-maximum in the budget set.  Since all agents have identical budgets, $x^*_i$ is $\succsim^{lex}_i$ preferred to any $x^*_j$, and
the allocation $\allocation^*$ is  $\succsim^{lex}$-aggregate-EF.

\textbf{Claim 3:  $\allocation^*$ is Pareto optimal.}
Suppose that some allocation $\allocation'$ Pareto-improves on $\allocation^*$, so that
$u_s(x'_i)\geq u_s(x^*_i)$ for all $s\in S_i$ and $i\in I$, and $u_{s'}(x'_j)> u_{s'}(x^*_j)$ for some
$s'\in S_j$ and $j\in I$. The definition of $\succsim^{lex}_i$ then yields
$x'_i \succsim^{lex}_i x^*_i$ for all $i\in I$,
and   $x'_{j}  \succ^{lex}_{j} x^*_{j}$ for $j\in I$.
By the arguments in Claim 2,
$x^*_i$ is for each agent $i$ the unique $\succsim^{lex}_i$-maximal choice in their budget set.
So
$\pricevector^*\cdot x'_{j} > \pricevector^*\cdot x^*_{j} $ holds for any agent $j$ with $x'_{j}  \succ^{lex}_j x^*_{j}$ while $x'_{i}=x^*_{i}$ holds for all other agents $i$. The monotonicity of  agents' preferences implies
$\pricevector^* \cdot x^*_i
= \pricevector^*\cdot\averagebundle$ for any agent $i$. Summing over all agents, yields  $\sum_{i\in I} \pricevector^*\cdot x'_i > \pricevector^*\cdot \endowment$ -- a contradiction to the
 feasibility of $\allocation'$ which requires
$\sum_{i\in I} x'_i \leq \endowment$.
\end{proof}

The requirement that allocations satisfy no-envy according to at least one self of each agent would appear to be weaker than $\succsim^{lex}$-aggregate no envy.
To see that this is not the case, we construct a two-goods economy where no unambiguous-FS Pareto optimum is one-self envy-free.
Assume two  rational agents with different Cobb-Douglas utilities and a boundedly rational agent each of  whose two  selves  only cares about one good.  Say $\allocation^*$ was an unambiguous-FS Pareto optimum satisfying no-envy according to at least one self of each agent. By unambiguous-FS, agent 3's selves prefer $(y^*_3,z^*_3)$ to the fair-share $\averagebundle=(1,1)$, so $y^*_3\geq 1$ and $z^*_3\geq 1$.
W.l.o.g., say that no envy holds for the self of agent 3  that only cares about $y$, so that $y^*_3\geq y^*_2,y^*_1$. For agents 1 and 2 not to envy agent 3, we then must have $z^*_1,z^*_2\geq z^*_3$. By the single crossing property  agents 1 and 2 must consume different bundles. The agents, in sum, consume more than three units of $z$ --- a contradiction.

While the two selves of  agent 3 violate our standard monotonicity assumption, nearby economies with strictly monotonic preferences will, by continuity, also lack unambiguous-FS Pareto optima that are one-self envy-free.
To construct such a nearby economy, represent the preferences  agent $3$'s  two selves by
$y_3+\epsilon \sqrt{z_3}$ and $z_3+\epsilon\sqrt{y_3}$ for some small $\epsilon>0$.%
\footnote{A similar example shows that we cannot replace the leximin aggregator in Theorem \ref{positive-collective-ef} with an arbitrary aggregator: there is no unambiguous-FS aggregate-EF Pareto optimum, according to the aggregator where agent 3 ranks any bundle with more $y$ above any bundle with less $y$ and considers $z$ only to compare two bundles with equally much $y$.
Moreover, for every real constant $\kappa$,
suppose that the preferences of agent $i$ are aggregated using the function $\sum_{s\in S_i}u_s^{\kappa} / \kappa$ (this is equivalent to the utilitarian aggregator for $\kappa=1$, approaches the Nash aggregator for $\kappa=0$, and approaches the leximin aggregator for $\kappa\to-\infty$; see \citet{Moulin2004Fair}).
Then, a similar example shows that there may be no unambiguous-FS aggregate-EF Pareto optimum for any $\kappa>-\infty$.
}

\subsection{Market equilibrium and unambiguous envy-freeness}
\label{sec:pone-without-ceei}

The classic proof that envy-free Pareto optima exist \citep{foley1967resource} argues that market equilibria from equal endowments yield such allocations. Since the fair-share is  each agent's endowment, each agent must weakly prefer the bundle she buys in equilibrium  to the fair-share.
Since each agent faces the same budget set, each agent could choose any other agent's bundle. So
no agent  envies any other.
This technique does not fare as well in economies with boundedly rational agents.
Proposition \ref{negative-pone} already shows that unambiguous-EF Pareto optima need not exist. But even
even if such
 allocations exist, they may not arise as market equilibria from equal endowments.

\begin{proposition}
\label{negative-ceei}
(a)
Some economies
have unambiguous\-/EF and unambiguous\-/FS Pareto optima, even though no such allocation arises as a market equilibrium
from equal endowments.

(b) In any economy,
if an unambiguous\-/EF allocation $\allocation$ can be sustained as a market equilibrium, and if agent $i$ is more rational than agent $j$, then agent $j$'s equilibrium budget must be at least as large as agent $i$'s.
\end{proposition}
\begin{proof}
(a)
Fix a three-agent two-good economy. Say the two fully rational agents 1 and 2  are represented by Cobb-Douglas utilities with respectively $\alpha_1=\frac{1}{3}$ and $\alpha_2=\frac{2}{3}$. These utilities also represent the two selves of the boundedly rational agent 3.
We first show that this economy has a  unambiguous-EF Pareto-optimum $\allocation^*=\big((y_1,z_1), (y_{2},z_{2}),(y_3,z_3)\big)$. For agents 1 and 2 not to envy agent 3 and vice versa, $y_1^{\frac{1}{3}}z_1^{\frac{2}{3}}=y_3^{\frac{1}{3}}z_3^{\frac{2}{3}}$
and
$y_{2}^{\frac{2}{3}}z_{2}^{\frac{1}{3}}=y_3^{\frac{2}{3}}z_3^{\frac{1}{3}}$ must hold. If $\allocation^*$  is on the boundary of $X$ then at least one self has zero utility.
  No envy then implies that all  selves have utility zero
  at $\allocation^*$ which can therefore not be Pareto optimal.
So $\allocation^*$
must be in the interior of $X$ and
$\frac{1}{2}\frac{z_1}{y_1}=
MRS_1(y_1,z_1)=MRS_{2}(y_{2},z_{2})
=
2\frac{z_{2}}{y_{2}}$ must hold. Combining the preceding equations with the resource constraints  $y_1+y_{1}+y_3=3$ and
$z_1+z_{2}+z_2=3$ we obtain a system of
5 equations in 6 unknowns. To obtain a concrete solution $\allocation^*$ (illustrated in Figure \ref{fig:pone-none-without-ce}), we additionally impose the symmetry condition $y_3=z_3$. The symmetric solution is:
\begin{align*}
y^*_1 \approx 0.654 && z^*_1\approx 1.308
\\
y^*_2 \approx 1.308 && z^*_2\approx 0.654
\\
y^*_3 \approx 1.038 && z^*_3\approx 1.038
\end{align*}

\begin{figure}
\floatbox[{\capbeside\thisfloatsetup{capbesideposition={left,top},capbesidewidth=.35\textwidth}}]
{figure}
[\FBwidth]
{
\caption[]{
\small
\label{fig:pone-none-without-ce}
Proof of Prop. \ref{negative-ceei}.
The blue discs denote the allocations of the agents in the symmetric unambiguous-EF Pareto-optimum.
The solid green line is an indifference curve of agent $1$ and self $h$ of agent 3; the dashed red line is an indifference curve of agent $2$ and self $w$ of agent 3.
The black dotted line represents the budget constraint with endowment $(0.981,0.981) < \averagebundle$.
}
}
{
\includegraphics[width=.6\textwidth]{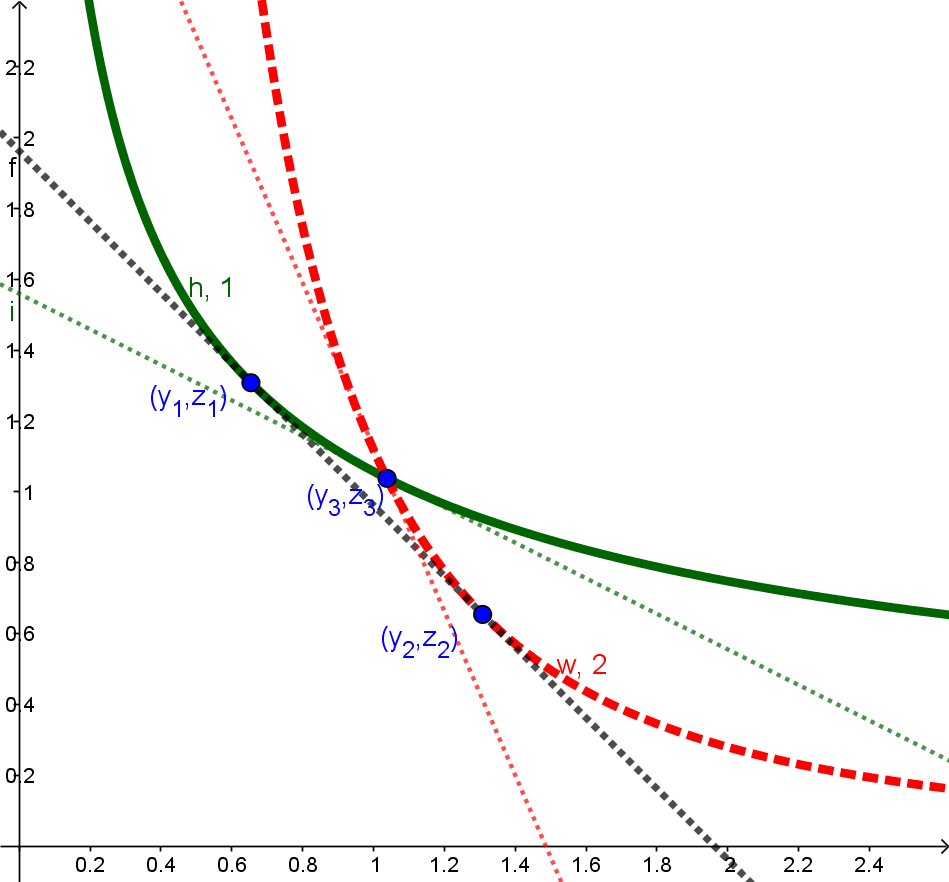}
}
\end{figure}

Since agent 3's bundle in $\allocation^*$ contains  strictly more than the average share from each good, $\allocation^*$ cannot
be obtained as a market equilibrium from equal endowments. So suppose  some other unambiguous-EF Pareto optimum $\big((y_1,z_1)$, $(y_{2},z_{2})$, $(y_3,z_3)\big)$ could be obtained as a market equilibrium from equal endowments.
By Pareto-optimality and the single-crossing property we have $(y_1,z_1)\neq (y_{2},z_{2})$, so that
 $(y_3,z_3)\neq (y_1,z_1)$
or $(y_3,z_3)\neq (y_{2},z_{2})$  (or both).
W.l.o.g. say $(y_3,z_3)\neq (y_1,z_1)$.
Since all selves have strictly convex preferences, the fully rational agent 1 strictly prefers
$(y_1,z_1)$ to all other bundles in the budget set, in particular  $u_1(y_1,z_1)>u_1(y_3,z_3)$.
 But then agent 3 envies agent 1 according to the $u_1$ self --- a contradiction.

 (b) Say $\allocation$ is  unambiguous-EF  and  $(\pricevector,\allocation,\allocation^0)$ is a market equilibrium from some endowment $\allocation^0$.
  Suppose agent $i$'s equilibrium budget was strictly larger than agent $j$'s (so $\pricevector x^0_i>\pricevector x^0_j$).
By definition of market equilibrium, agent $i$'s bundle $x_i$ is optimal in agent $i$'s budget set according to some aggregator $\succsim^{agg}_i$.
Denote by $x^*_i$ the $\succsim^{agg}_i$-optimal choice from agent $j$'s budget set, so that $x^*_i\succsim^{agg}_i x_j$.
Since agent $i$'s budget set strictly contains agent $j$'s, and since all selves preferences (and therefore also $\succsim^{agg}_i$) are strictly monotonic,
$x_i\succ^{agg}_ix^*_i$. 
Since $\allocation$ is unambiguous-EF, $u_s(x_j)\geq u_s(x_i) $ holds for all $s\in S_j$. Since agent $i$ is more rational than agent $j$, we have $S_i\subset S_j$. Since $\succsim^{agg}_i$ is an aggregator of agent $i$'s selves, we in sum get the contradiction $x_j\succsim^{agg}_i x_i\succ^{agg}_ix^*_i\succsim^{agg}_i x_j$.
\end{proof}

Our permissive definition of market equilibrium (Sub. \ref{sub:equilibria}) strengthens the above result. No matter which aggregators the agents happen to use while shopping,   no market equilibrium from equal endowments is unambiguous-EF in the above example.
The result continues to hold for any more restrictive notion of market equilibrium.

To illustrate Proposition \ref{negative-ceei}, consider insurance policies against earthquakes and flooding.  Two fully rational agents with the same income but different priors may buy different insurance policies.
Now consider a boundedly rational third  agent, whose selves use the priors of the two preceding agents, as in the \cite{Bewley} model of Knightian uncertainty.
The third agent then needs a higher income to avoid envy of the other two agents. The more rational agents can better target their income to achieve their goals.
Indeed, the symmetric allocation in part a) of the above proof, can arises as a market equilibrium
when we endow the boundedly rational agent 3 with a higher income than the fully rational agents.

\section{Egalitarian Equivalence}
\label{sec:poee}
\subsection{Unambiguous egalitarian-equivalence}
Proposition \ref{negative-poee}  shows that
unambiguous egalitarian\-/equivalence is even more restrictive than unambiguous envy\-/freeness, in the sense that  even two-agent economies may lack  unambiguous-EE Pareto optima.
The fact that an agent with two selves with the single-crossing property is only indifferent between a reference bundle $r$ and some bundle $x$ if $x=r$ is the driving force behind the upcoming non-existence proof. This proof specifies an economy with two such agents who must  both consume the  reference bundle $r$ in any unambiguous-EE allocation. At the same time, we differentiate the agents enough for them to consume different bundles in any Pareto optimum.

\begin{proposition}
\label{negative-poee}
When there are at least two agents and at least two goods, a unambiguous-EE Pareto optimum might not exist.
\end{proposition}
\begin{proof}
Fix a two-agents two-goods economy. Define   $u_{\gamma}(y,z)\colon=y^{\gamma}+z$. Represent the two selves of agents 1 and 2  by $u_{1/5}$ and $u_{1/4}$, and respectively $u_{1/3}$ and $u_{1/2}$.
Fix an unambiguous-EE Pareto optimum  $\allocation $ with reference bundle $r$. For $\allocation$ to be an unambiguous-EE Pareto optimum, each agent must consume a positive amount of $y$.\footnote{This holds since the marginal utility of $y$ at 0 is for each $u_{\gamma}$ infinite.} Assuming large enough total endowment of $z$, each agent must also consume a positive quantity of $z$ in $\allocation$.
The single-crossing property together with  $u_{1/4}(y_1,z_1)=u_1(r)$ and $u_{1/5}(y_1,z_1)=u_1(r)$ imply $(y_1,z_1)=r$.
By the same token, agent $2$ must also consume $r$, so that $\allocation=(r,r)$.
Since $\max_{s\in S_1} MRS_s(r) > \min_{s\in S_2} MRS_s(r)$ the allocation is by
 Theorem \ref{po-characterization}   not Pareto optimal.
Figure \ref{fig:poee} illustrates.
\end{proof}

\begin{figure}
\floatbox[{\capbeside\thisfloatsetup{capbesideposition={left,top},capbesidewidth=.35\textwidth}}]
{figure}
[\FBwidth]
{
\caption[]{
\label{fig:poee}
\small
Proof of Proposition \ref{negative-poee}.
Agent 1 and 2's two selves are respectively illustrated by the solid and dotted indifference curves.
To be indifferent between their bundle and some $r$, they must consume exactly $r$. But this allocation is Pareto dominated by giving the solid agent $r+(\epsilon,-\epsilon)$ and giving the dotted agent $r+(-\epsilon,\epsilon)$, for some small $\epsilon$.
}
}
{
\includegraphics[width=.6\textwidth]{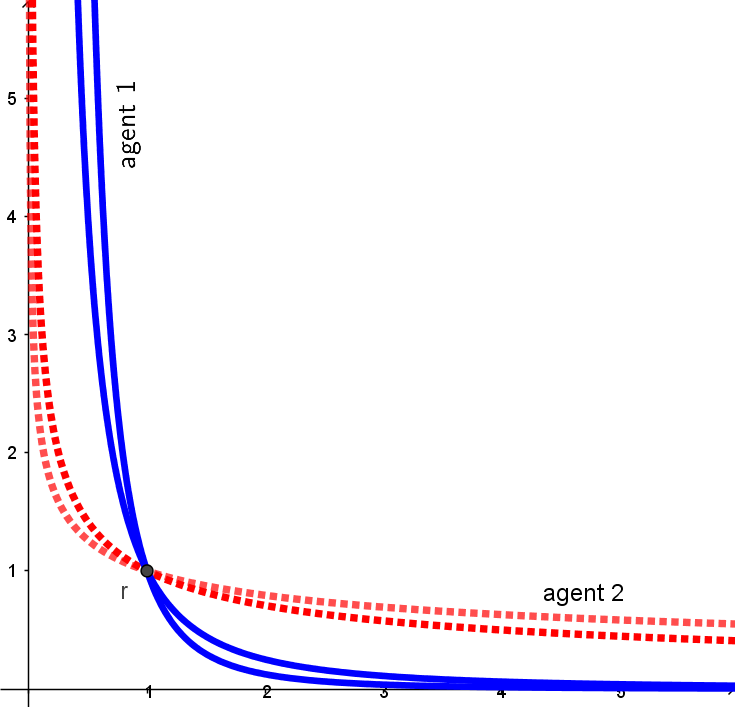}
}
\end{figure}

%

The analysis of the above example remains unchanged if we add further agents to the above economy. As long as such agents have multiple selves with the single-crossing property, they must consume the reference bundle $r$ to be unambiguously indifferent between their consumption and $r$. With enough differentiation between the agents they cannot all consume the same bundle in a Pareto optimum.
For an example with more than two goods,
replace the single good $z$ with some $G-1$ perfect substitutes $z^1,\ldots,z^{G-1}$,
as in comment after Proposition \ref{negative-pone}.
As is the case for unambiguous\-/EF Pareto-optima,
the existence of unambiguous\-/EE Pareto-optima is not ``monotone'' in the variability of selves: if we increase each agent's set of selves in the above example to include each other's selves, then the equal split is Pareto\-/optimal and unambiguous\-/EE {by Theorem \ref{po-characterization}.

\subsection{Aggregate  egalitarian-equivalence}
\label{sub:collective-ee}
In parallel to Subsection \ref{sub:aggregate-ef} on envy freeness,  there exist aggregators for which  efficiency, the unanimous fair-share guarantee and aggregate egalitarian equivalence are compatible.
Theorem \ref{positive-collective-ee} shows that such compatibility is achieved for the \emph{Nash aggregator}.
To define this aggregator,  interpret agents and their selves as families and their members.
If these families use Nash-bargaining with the fair-share as the outside option, they choose the  bundle $x^*_i$ that maximizes the product of their members' utilities.

The monotonicity and continuity of all selves' preferences allows us to
 normalize any  $u_s$ such that
 $u_s(x_i)\colon=t-1$ if $u_s(x_i)=u_s(t\cdot \averagebundle)$, where $u_s(\averagebundle)=0$ continues to hold.
  Define
$U^{Nash}_i(x)\colon=\sqrt[\mid S_i\mid]{\Pi_{s\in S_i}u_s(x_i)}$ to represent  agent $i$'s Nash-bargaining aggregator  over bundles in the interior of $E$.
For any $t\geq 1$ we have $U^{Nash}_i(t\cdot \averagebundle)= \sqrt[\mid S_i\mid]{\Pi_{s\in S_i}u_s(t\cdot \averagebundle)}= \sqrt[\mid S_i\mid]{(t-1)^{\mid S_i \mid }}=t-1$. Since $U^{Nash}_i$ only represents an aggregator for bundles that all selves strictly prefer to $\averagebundle$, we use the leximin aggregator for all remaining comparisons to define an aggregator $\succsim_{i}^{Nash}$. Formally, the value of $x\succsim^{Nash}_i x'$ equals ---
\begin{align*}
&
U_i^{Nash}(x)\geq U_i^{Nash}(x') && \text{if~}  u_s(x),u_s(x')>0 \text{~for all~} s\in S_i;
\\
&
x\succsim^{lex}_i x' && \text{otherwise.} 
\end{align*}

\begin{theorem}
\label{positive-collective-ee}
If all selves have strictly convex preferences, there exist unambiguous\-/FS and $\succsim^{Nash}_f$\-/aggregate\-/EE Pareto optima.
\end{theorem}

\begin{proof}
Define $\allocation^*$ as the allocation in $E$ that maximizes the minimal Nash bargaining utility:
\begin{align*}
\allocation^*\in \arg\max_{\allocation\in E}\min_{i\in I}U^{Nash}_i(x_i)~~~
\end{align*}
Such an allocation exists since
 each  $U^{Nash}_i$ is continuous on the - compact and non-empty - set $E$. Define $t^*\colon=1+\min_{i\in I} U^{Nash}_i(x^*_i)$. Since $\allocation^*\in E$, $t^*-1\geq 0$.

\bigskip

\textbf{Claim 1:
There is no allocation $\allocation \in E$
for which
$U^{Nash}_i(x_i)\geq t^*-1$ for all $i$ and $U^{Nash}_j(x_j)>t^*-1$ for some $j\in I$.} Suppose $E$ contained such an $\allocation$.  Since $U^{Nash}_j$ is continuous, there exists a small bundle $\epsilon>0$ such that $U^{Nash}_j(x_j-\epsilon)>t^*-1$ and $u_s(x_j-\epsilon)>0$ for each $s\in S_j$.
 Evenly redistribute $\epsilon$ from agent $j$ to all others to obtain the allocation $\allocation^{\circ}$.
The strict monotonicity of all selves' preferences yields $U^{Nash}_i(x^{\circ}_i)> t^*-1$  for all $i\neq j$.
In sum we get
   $\min_{i\in I}U^{Nash}_i(x^{\circ}_i)>t^*-1$ ---
    a contradiction to the definition of $\allocation^*$.

\textbf{Claim 2: $\allocation^*$ is unambiguous-FS.} Follows from $\allocation^*\in E$.

\textbf{Claim 3: If $\allocation^*\neq \equalsplit$ then 
$\allocation^*$ is in the interior of $E$, that is,
$\allocation^*$ yields strictly positive utilities to all selves.}
Suppose for contradiction that $\allocation^*\neq \equalsplit$ but $u_s(x^*_i)=0$ for some agent $i$ and  $s\in S_i$.
This implies that
$U^{Nash}_i(\allocation^*_i)=0$,
and $t^*-1=\min_{i\in I}U^{Nash}_i(\allocation^*_i)=0$. Define $\allocation'\colon=\frac{1}{2}\allocation^*+ \frac{1}{2}\equalsplit$.
Since $\allocation^*\neq \equalsplit$, there exists an agent $j$ with $x^*_j\neq \averagebundle$.
By the strict convexity of all selves' preferences, $U^{Nash}_i(x'_i)\geq U^{Nash}_i(\equalsplit)=\min_{i\in I}U^{Nash}_i(x^*_i)=0$ holds for all $i\in I$ and $U^{Nash}_j(x'_j)>0$ --- a contradiction to Claim 1.

\textbf{Claim 4: $\allocation^*$ is Pareto-optimal.}
If some allocation $\allocation$ Pareto-dominates $\allocation^*$, then $U^{Nash}_i(x_i)\geq U^{Nash}_i(x^*_i)\geq \min_{i\in I}U^{Nash}_i(x^*_i)=t^*-1$ for all $i$ and  $U^{Nash}_j(x_j)> U^{Nash}_j(x^*_j)\geq \min_{i\in I}U^{Nash}_i(x^*_i)=t^*-1$ for some $j\in I$, contradicting Claim 1.

\textbf{Claim 5: $\allocation^*$ is $\succsim^{Nash}$-aggregate-EE.} If $\allocation^*=\equalsplit$, then  each agent gets exactly $\averagebundle$, and $\allocation^*$ is even unambiguous-EE.
So assume that $\allocation^*\neq \equalsplit$. By Claim 3,
$\allocation^*$ is in the interior of $E$.
 By Claim 1,  $U^{Nash}_i(x^*_i)=t^*-1= U^{Nash}_i(t^*\averagebundle)$ holds for all agents $i\in I$. Since $U^{Nash}_i$ represents the aggregator $\succsim^{Nash}_i$ in the interior of $E$, $x^*$ is $\succsim^{Nash}$\-/aggregate\-/EE with the reference bundle $t^*\averagebundle$.
 \end{proof}

The proof of Theorem \ref{positive-collective-ee} uses two properties of the Nash aggregator:
 Firstly, the continuity of $U^{Nash}_i$ on $E$ is required for  $\allocation^*$ to be well-defined. Secondly,  $U^{Nash}_i(x^*_i)>0$ holds if and only if $u_s(x^*_i) > 0$ for all $s\in S_i$,
 so the allocation $\allocation^*$ is either $(\averagebundle, \dots, \averagebundle)$
or in the interior of $E$ (Claim 3). 

The proof holds unchanged for any aggregator with these two properties.
For example,
$U_i(x_i) := \int_{t_1=0}^{u_1(x_i)} \ldots \int_{t_k=0}^{u_k(x_i)} f_i(t_1,\ldots,t_k)$ for
any positive measurable function $f_i$ of $k$ variables, where $S_i=\{u_1,\dots, u_k\}$.
The function $U^{Nash}_i$ corresponds to the special case $f_i\equiv 1$.%
\footnote{
We are grateful to Magma and Martin R. for these insights.
}
Other natural aggregators, such as the leximin or the utilitarian aggregator, violate one of these properties. 
Similarly, if we consider egalitarian equivalence according to one self per agent, then the second property is violated.
We do not know for which alternative aggregators there always exist unambiguous-FS aggregate-EE Pareto-optima.

\subsection{Weakly convex preferences}

Proposition \ref{negative-collective-ee} below shows that
economies where selves have merely convex (not strictly convex) preferences need not have any unambiguous-FS and aggregate-EE Pareto optima. We can however guarantee any two out of these three properties:

\begin{proposition}
\label{negative-collective-ee}
When all selves have  (weakly) convex preferences:
(a)
Some three-agent economies have no unambiguous-FS and aggregate-EE Pareto optimum.
(b) There always exist aggregate\-/EE Pareto optima, unambiguous\-/FS Pareto optima, and aggregate\-/EE unambiguous\-/FS allocations.
\end{proposition}

\begin{figure}
\floatbox[{\capbeside\thisfloatsetup{capbesideposition={left,top},capbesidewidth=.25\textwidth}}]
{figure}
[\FBwidth]
{
\caption[]{
\label{fig:poiee}
\small
Proof of Proposition \ref{negative-collective-ee}.
The solid green line is $y+z=2$; in any unambiguous-FS allocation, all bundles must be on that line.
}
}
{
\includegraphics[width=.7\textwidth]{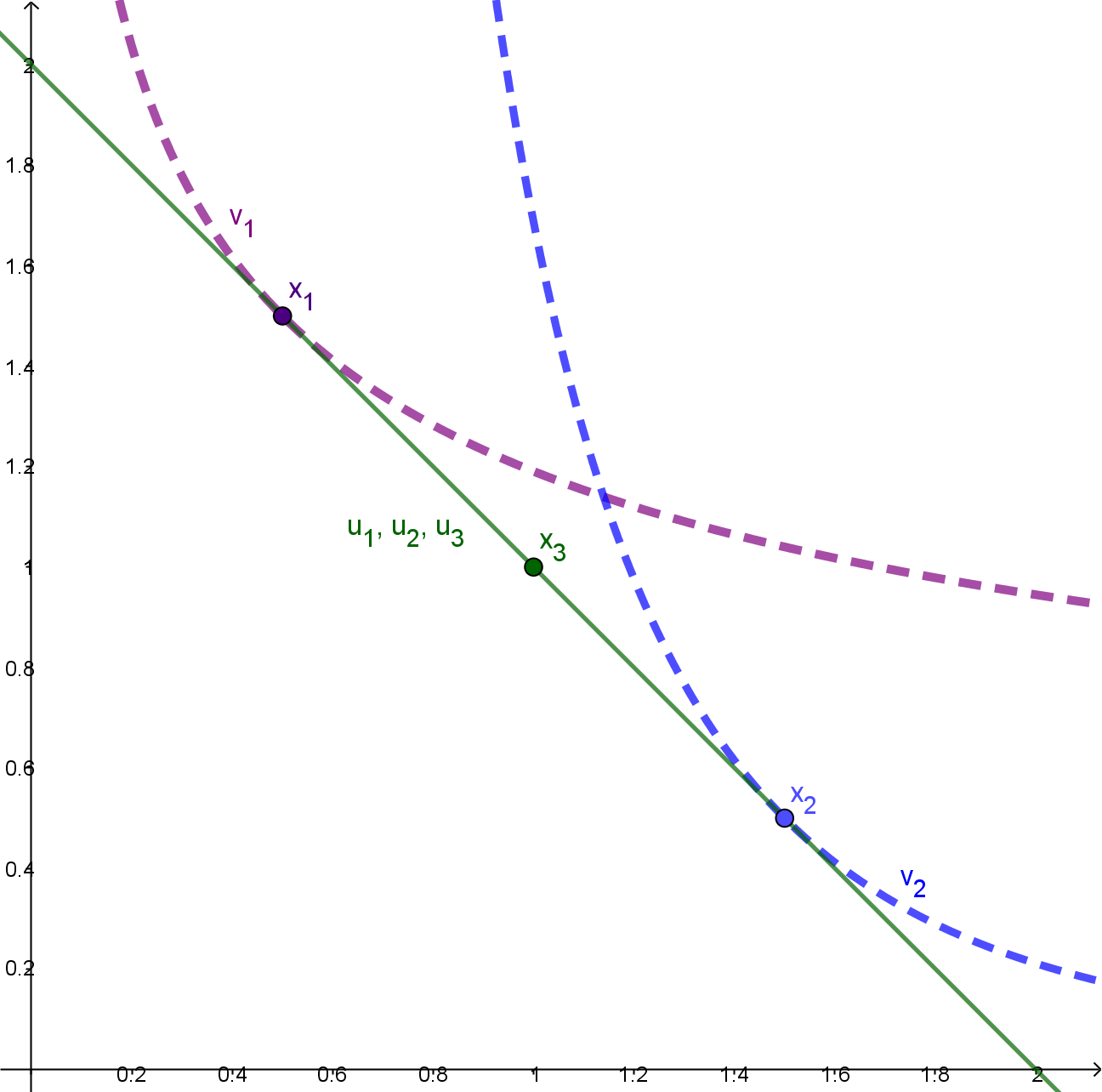}
}
\end{figure}


\begin{proof}
(a)
Say 
$u_3(y,z)=y+z$
represents the preferences of the fully rational agent 3.
Agents 1 and 2 have two selves each. The preference of their first selves coincide with agent 3's, and we have  $u_{1}(y,z) = u_{2}(y,z) = u_3(y,z) = y+z$. The two other selves of agents 1 and 2 are represented by  different Cobb-Douglas utilities: $v_1(y,z) = y^{1/4}z^{3/4}$ and $v_2(y,z)=y^{3/4}z^{1/4}$.
For the fair-share guarantee to hold according to the three selves with identical linear preferences, $y_1+z_1=y_2+z_2=y_3+z_3=2$ must hold.
The only Pareto-optimal allocation satisfying these equations gives agent 1 $(1/2,3/2)$ and agent 2 $(3/2,1/2)$; see Figure \ref{fig:poiee}.

Now suppose the allocation was aggregate-EE for some aggregator and
reference bundle $r:=(y_r,z_r)$.
For agent 3  to be indifferent between his bundle $(1,1)$ and $r$, $y_r+z_r$ must equal 2, so $r$ must lie on the solid green line in Figure \ref{fig:poiee}.
This implies that the two other agents are according to their linear selves indifferent between their  bundles and $r$. But, for any $r$ with $y_r+z_r=2$, at least one agent prefers her  bundle to $r$ according to her Cobb-Douglas self.
This agent must according any aggregator strictly prefer her bundle to $r$.%
\footnote{
To see where the proof of Theorem \ref{positive-collective-ee} fails on this  economy,
note that $E$ contains only allocations $\allocation$ with $y_i+z_i=2$ for all agents $i$ (the set represented by the solid green line). So  $U^{Nash}_i(x_i)=0$ holds for all $i \in I$ and all $\allocation \in E$, and any $\allocation\in E$ satisfies the definition of $\allocation^*$. In particular, Claim 3 fails.
}
\bigskip

(b)
Theorem \ref{positive-pofs} proved the existence of unambiguous\-/FS Pareto optima. The equal allocation is aggregate\-/EE and unambiguous\-/FS.
To find an aggregate-EE Pareto optimum without unambiguous-FS, replace the Nash aggregator in the proof of Theorem \ref{positive-collective-ee}
with the utilitarian aggregator.
As this aggregator is represented by a continuous function on the entire set $X$ (not only in the interior of $E$),
the proof of aggregate-EE is now valid without the need for Claim 3.
\end{proof}

\section{Intransitive aggregators}\label{sec:intrans}
An alternative approach towards fairness in the context of behavioral economics would explicitly consider some intransitive aggregations of the agents' selves.
Agents might aggregate their objectives intransitively, using voting rules
  or by sequential procedures,  as suggested by respectively  \citet{green2007choice}  and \citet{ApesteguiaBallester}. 
For a concrete example, say an allocation is \emph{majority-fair} if at least half of the selves of any given agent deem it to be fair.\footnote{
Majority fairness was studied in the context of cake-cutting \citep{SegalHalevi2018Families}
and indivisible item allocation
\citep{segal2019democratic}.
} So an allocation is majority envy free (majority-EF) if no more than half of any agent's selves envy a different agent, it is majority egalitarian equivalent (majority-EE) if there exists a references bundle $r$ such that at least half of the selves of any given agent are indifferent between $r$ and their bundle.

Our negative results can be extended to show that majority-fairness is incompatible with efficiency.
For a parallel to the our result on unambiguous-EF results Theorem \ref{negative-pone}, consider an economy with four rational agents ($1,2,3,4$) and one boundedly rational agent 5 with three selves ($5a,5b,5c$).
Say that the utilities of all selves have the single-crossing property. The indifference-curves of the rational agents 1 and 2 are between the indifference curves of the  selves $5a$ and $5b$; the indifference curves of the remaining two rational agents 3 and 4 are between the indifference curves of selves $5b$ and $5c$. Suppose a majority-EF efficient allocation $\allocation^*$ exists and that all agents consume positive quantities of each good in this allocation. Due to the single crossing property, no two rational agents can consume the same quantity in the allocation $\allocation^*$.
In any majority-EF allocation, at most one self of agent 5 may envy any other agent.
So say without loss of generality that self $5a$ as well as self $s\in\{5b,5c\}$ do not envy any other agent.
Now we have a situation similar to the example used to prove Proposition \ref{negative-pone}:
No envy, together with the fact that agent 1's indifference curves lie between the indifference curves of selves $5a$ and $s$, imply that agents 1 and 5 must consume the same bundle. Mutatis mutandis, we see that also agent 2 and agent 5 have to consume the same bundle --- a contradiction to agents 1 and 2 consuming different bundles in any interior Pareto optimum.

Clearly, the dearth of unambiguous-EF allocations is owed to the fact that the unambiguous ranking  according to which each agent must prefer their own bundle to any other agent's might be highly incomplete.  Our notion of majority-EF then presents an interesting contrast: even though majority preference is a complete ranking, the example used to prove the non-existence result for unambiguous-EF transfers with some modification to the case of majority-EF allocations.  So we see that both - completeness and transitivity - play their role in existence of envy free Pareto optima in rational environments.

Similarly, our example in the proof of result that unambiguous-EE need not exist (Theorem \ref{negative-poee}) can be extended to majority egalitarian equivalence:
 Consider an economy with 2 agents, each of whom has 3 selves. Say all utilities in the model satisfy the single-crossing property.
In any majority-EE allocation, at least two selves  of each agent must be indifferent between their bundle and the reference bundle $r$, so the situation is similar to the one analyzed in Proposition \ref{negative-poee}.

\section{Related work}

\subsection{Behavioral Welfare Economics}
While there is a large literature on boundedly rational choices by single agents,
our paper belongs to the much smaller literature on
welfare economies with two or more boundedly rational agents.
\citet{BernheimRangel2009} define several notions of Pareto-optimality and of competitive equilibrium
for economies with multi-self agents,
and prove that  competitive equilibria always satisfy one of these notions of Pareto-optimality.
\citet{mandler2014indecisiveness}
proves that, with multi-self agents, the set of Pareto-optima might be very large, in the sense that it has the same dimension as the set of all allocations. Since almost all Pareto-optima lie in full-dimensional neighborhoods of other Pareto optima, he argues  that the Pareto criterion is of limited use for policy decisions. To close this gap of indecisiveness, \citet{mandler2020distributive} suggests to use utilitarian-optimality.
\citet{danan2015harsanyi} characterize a utilitarian social-choice rule when agents have incomplete preferences,
and \citet{FleurbaySchokkaert2013} characterizes an egalitarian social-choice rule --- based on the principle of income-equivalence --- with incomplete preferences.

\subsection{Fairness criteria}
The modern study of fairness in economics was initiated by
\citet{Steinhaus1948}, who
proved the existence of \emph{fair-share} allocations of a heterogeneous good (``cake''). Since then, fairness has been extensively studied in economics \citep{young1995equity,Moulin2004Fair,Thomson2011Fair} as well as in other disciplines such as mathematics and computer science.

The existence of \emph{envy-free} Pareto-optima was initially proved as a consequence of the existence of market equilibrium \citep{foley1967resource}.
\citet{varian1974equity} showed that no-envy Pareto-optima may exist even when a competitive equilibrium does not.
\citet{varian1974equity} replaces assumption of convex preferences with a condition on the set of allocations associated with any
 weakly-Pareto-optimal vector of utilities. If any such set is a singleton an envy free Pareto optimum exists.
\citet{svensson1983existence} and  \citet{Diamantaras1992Equity} then respectively
showed that \citet{varian1974equity}'s ``singleton''-requirement can be replaced by convexity and contractibility.
 \citet{Diamantaras1992Equity}'s result  applies to economies with public goods.
\citet{svensson1994sigma} and
 \citet{Bogomolnaia2017Competitive} showed that envy free Pareto optima exist under yet further relaxed conditions.

The \emph{egalitarian equivalence} criterion
was introduced by \citet{pazner1978egalitarian}. They argue that the equal split $\equalsplit$ where each agent gets the same bundle is, from an egalitarian perspective, an ideal division. Since $\equalsplit$
 is usually not efficient, \citet{pazner1978egalitarian} propose to regain efficiency by considering all allocations for which there exists an (infeasible) allocation where all agents consume the same bundle, such that each agent is indifferent between that bundle and their consumption in the allocation.
\citet{pazner1978egalitarian} proved that  egalitarian-equivalent Pareto optima exist
 in economies with production, which may not have any envy-free Pareto-optima \citep{vohra1992equity}.
\citet{thomson1990non} showed that egalitarian equivalent Pareto optima  are generically  not envy-free when there are at least three agents.

\subsection{Fair division among families}
Several papers study fair division with multi-self agents under the interpretation of agents and their selves as families and their members. 
\citet{SegalHalevi2018Families}
study fair division in cake-cutting problems, where the challenge is to allocate connected pieces, or at least pieces made of a small number of connected components.
\citet{ManurangsiSu17,Suksompong18,segal2019democratic,kyropoulou2019almost} study fair division in problems with indivisible goods.  Since there may be no fair allocation of indivisible goods, the focus in these studies is on finding approximately-fair allocations.
In contrast, our model of fully-divisible goods always allows for fair allocations, and the challenge is to find allocations that are both fair and efficient.
\citet{ghodsi2018rent} studies fair division of rooms and rent among families of tenants, using three notions of fairness termed strong, aggregate and weak. Their strong-fairness is our unambiguous no-envy; their weak-fairness holds if at least one self per agent does not envy; their aggregate-fairness  corresponds to our aggregate fairness with the utilitarian aggregator.




\subsection{Public and club goods}
The existence of fair allocations with public and private goods has been studied e.g. by \citet{Diamantaras1992Equity,Diamantaras1994Generalization,Diamantaras1996Set,Guth2002NonDiscriminatory}.
With multi-self agents, our fairness notions correspond to \emph{club goods} \citep{buchanan1965economic}, that is, goods that
are public among all selves of the agent --- but  private outside. As far as we know, fair allocation of goods among different clubs has not yet been considered.
Under the interpretation of multi-self agents as clubs,
 questions such as the optimal number of members in a club, optimal number of clubs, optimal quantity of club-good provision, pricing policies and exclusion mechanisms have been studied \citep{sandler1980economic,hillman1993socialist,sandler1997club,loertscher2017club,mackenzie2018club}.
A modern example of a club good is information: information is partly excludable (via intellectual property law), but once it is given to a group, it is not rival. Therefore our work may have implications on fair division of information, for example, dividing training samples for machine learning among groups of researchers.

\section{Acknowledgments}
We are grateful for the support of the Minerva foundation through the ARCHES prize.
Erel is grateful to the Israel Science Foundation for grant 712/20.

The paper started as a discussion in the economics stackexchange website.%
\footnote{https://economics.stackexchange.com/q/9916/385}
We are grateful to Shane Auerbach, Amit Goyal and Kitsune Cavalry for participating in the discussion.

We are grateful to Hal Varian, William Thomson, Herve Moulin, Gerald A Edgar, Martin R., and reviewers of Games and Economic Behavior for their very helpful feedback on a previous version of this paper.

\newpage

\bibliographystyle{apalike}
\bibliography{merged}
\end{document}